\newcommand{\kom}[1]{}
\newcommand{\abet}{\mathcal{A}}
\newcommand{\bbet}{\mathcal{B}}
\newcommand{\M}{\mathcal{M}}
\newtheorem{thm}{Theorem}
\theoremstyle{definition}
\title{A new proof for the decidability of D0L ultimate periodicity}
\author{Vesa Halava$^{1}$, Tero Harju$^{1}$and Tomi K\"arki$^{1,2}$ \\ \\
\institute{${}^{1}$Department of Mathematics\\
       University of~Turku, 20014 Turku, Finland}
\institute{${}^{2}$Department of Teacher Education\\
		University of~Turku, PO Box 175, 26101 Rauma, Finland}
}
\begin{document}
\maketitle

\begin{abstract}
We give a new proof for the decidability of the D0L ultimate
periodicity problem based on the decidability of $p$-periodicity of
morphic words adapted to the approach of Harju and Linna.
\end{abstract}

\section{Introduction}

L systems were originally introduced by A. Lindenmayer to model the
development of simple filamentous organisms~\cite{Lin68a, Lin68b}. The
challenging and fruitful study of these systems in the 70s and 80s
created many new results and notions~\cite{RS86}. In this paper we
consider the important problem of recognizing ultimately periodic
D0L sequences.

Let $\abet$ be a finite alphabet and denote the empty word by
$\varepsilon$. A \emph{D0L system} is a pair $(h,u)$, where $h \colon
\abet^* \to \abet^*$ is a morphism and $u$ is a finite word over
$\abet$. The \emph{language} of the D0L system is $L(h,u)=\{h^i(u)
\mid i \geq 0\}$ and the \emph{limit set} $\lim L(h,u)$ consists of all
infinite words $w$ such that for all $n$ there is a prefix of $w$
longer than $n$ belonging to $L(h,u)$. Clearly, if the limit set is
non-empty, then one can effectively find integers $p$ and $q$ such
that $h^p(u)$ is a proper prefix of $h^{p+q}(u)$ and
\[
\lim L(h,u) = \bigcup_{i=0}^{q-1} \lim L(h^q, h^{p+i}(u)),
\]
where $|\lim L(h^q, h^{p+i}(u))|=1$. Hence, we may restrict to D0L
systems $(h,u)$ where $h$ is prolongable on $u$, i.e., $h(u)=uy$ and
$h^n(y)\neq \varepsilon$ for all integers $n\geq0$. In this case,
$h^n(u)$ is a prefix of $h^{n+1}(u)$ and the limit is the following fixed
point of $h$:
\[
h^\omega(u)=\lim_{n \to \infty}h^n(u)=uyh(y)h^2(y)\cdots.
\]

An infinite word $x$ is ultimately periodic if it is of the form
$x=uv^\omega=uvvv\cdots$, where $u$ and $v$ are finite words. The
length $|u|$ is a \emph{preperiod} and the length $|v|$ is a
\emph{period} of~$x$. An infinite word $x$ is \emph{ultimately
$p$-periodic} if $|v|=p$. The smallest period of $x$ is called
\emph{the period} of~$x$.

Now we are ready to formulate the \emph{D0L ultimate periodicity
problem}: \emph{Given a morphism $h$ prolongable on $u$, decide
whether $h^\omega(u)$ is ultimately periodic.} Note that in this
problem we may assume that $u$ is a letter. Indeed, if $h(u)=uy$,
then instead of $(h,u)$ we may consider $(h',a)$ where $a \not \in
\abet$ and $h'\colon (\abet \cup \{a\})^* \to (\abet \cup \{a\})^*$
where $h'(a)=ay$ and $h'(b)=h(b)$ for every $b \in \abet$. The limit
$h^\omega(u)$ is ultimately periodic if and only if $h'^\omega(a)$ is.

The decidability of the ultimate periodicity question for D0L
sequences was proven by T.~Harju and M.~Linna~\cite{HarLin86} and,
independently, by J.-J. Pansiot~\cite{Pan86}; see also a more recent
proof of J. Honkala~\cite{Hon08}. In the binary case the problem was
effectively solved by Séébold~\cite{See88}. Here we show how the
proof of~\cite{HarLin86} can be simplified using a recent result
concerning the decidability of the $p$-periodicity problem.

Before giving the proof, we introduce the following notation. Given a
morphism $h\colon \abet^* \to \abet^*$, we call a letter $b \in \abet$
\emph{finite} \kom{(with respect to $h$)} if $\{h^n(b) \mid n \geq
0\}$ is a finite set. Otherwise, $b$ is an \emph{infinite letter}.
Moreover, we say that a letter $b$ is \emph{recurrent} in
$h^\omega(a)$ if it occurs infinitely often in~$h^\omega(a)$. For a
given morphism $h$ prolongable on $a$ and for an infinite word
$h^\omega(a)$, denote the set of finite letters by $\abet_F$, the set
on infinite letters by $\abet_I$ and the set of recurrent letters by
$\abet_R$. Also, denote by $\abet_1$ the subset of $\abet$ which
consists of the infinite letters occurring infinitely many times
in~$h^\omega(a)$, i.e., $\abet_1=\abet_I \cap \abet_R$.

Let us shortly describe how the sets $\abet_F$, $\abet_I$ and
$\abet_R$ can be constructed. Note that if $b$ is a mortal letter, i.e.,
$h^n(b)=\varepsilon$ for some $n \geq 1$, then
$h^{|\abet|}(b)=\varepsilon$. Denote $\hat{h}=h^{|\abet|}$ and
denote the set of the mortal letters by $\M$. Note also that $b$ is a finite
letter if and only if there exists a word $u \in \{h^n(b) \mid n \geq 0\}$
such that $u = h^p(u)$ for some $p \geq 1$. Clearly, $\{\hat{h}^n(b)
\mid n \geq 0 \}$ is finite if and only if $\{h^n(b) \mid n \geq 0 \}$ is
finite. Hence, by replacing $h$ with $\hat{h}$ we may assume that
$h(b)=\varepsilon$ if $b \in \M$. Moreover, let $\bbet=\abet \setminus
\M$ and let $g \colon \bbet^* \to \bbet^*$ be a morphism defined by
$g(b)=\mu h(b)$, where
\[
\mu(b)=\left \{ \begin{array}{rr}
\varepsilon, & \text{if $b \in \M$,}\\
b, & \text{otherwise.}
\end{array}
\right.
\]
Now $g$ is non-erasing, and $b \in \abet_F$ if and only if $\{g^n(b)
\mid n \geq 0\}$ is finite. Namely, for any $n \geq 0$, we know by the
definition of $g$ that the word $h^n(b)$ can be obtained by inserting
a finite number of mortal letters to $g^n(b)$. The set $\{g^n(b) \mid n
\geq 0\}$ is finite if and only if for some $n$ all letters in $g^n(b)$
belong to $U_1=\{b \in\bbet \mid g^i(b) \in \bbet \text{ for every
$i\geq 0$}\}$. If $U_i=\{b \in \bbet \mid g(b) \in U_{i-1}^*\}$, then
$U_{i-1} \subseteq U_i$ and
\[
\abet_F \setminus \M = \bigcup_{i=1}^\infty U_i= U_{|\abet|}.
\]
Hence, we can effectively calculate $\abet_F$ and $\abet_I=\abet
\setminus \abet_F$. In order to find the recursive letters, we
construct a graph $G$ where the set of vertices is $\abet$ and there
is an edge from $b$ to $c$ if $c$ occurs in the image $h(b)$. Let $h(a)=ax$. If there
are infinitely many paths from a letter in~$x$ to the letter~$b$,
then $b$ occurs infinitely many times in $h^\omega(a)$.

\goodbreak
\section{Decidability of the $p$-periodicity problem}

Let $p \geq 1$, and let $x=(x_n)_{n \geq 0}$ be an infinite word over $\abet=\{a_1, \ldots,
a_d\}$. For $0 \leq k \leq p-1$, we say that the letters occurring
infinitely many times in positions~$x_n$, where  $n \equiv k \pmod{p}$,
form the \emph{$k$-set of $x$ modulo $p$}. It was shown in~\cite{HHKR10}
that these $k$-sets can be effectively constructed for $x = h^\omega(u)$, where $h$ is
prolongable on the word $u$. This is based
on the fact that there exist integers $r$ and $q$ such that
\begin{equation}\label{eqLemma}
|h^r(b)| \equiv |h^{r+q}(b)| \pmod{p}
\end{equation}
for every letter $b \in \abet$. The incidence matrix of $h$ is the matrix
$M=(m_{i,j})_{1 \leq i,j \leq d}$ where $m_{i,j}$ denotes the number
of occurrences of $a_i$ in $h(a_j)$. The sequence of matrices
$M^n \bmod{p}$, where the entries are the residues modulo~$p$,
must be ultimately periodic. Since $|h^n(a_j)| \pmod{p}$ is the sum of
the elements in the $j$th column of $M^n$, we conclude that the
sequence $(|h^n(a_j)|)_{n \geq 0}  \pmod{p}$ is ultimately periodic for
every $a_j \in \abet$ and \eqref{eqLemma} follows.

In order to find the $k$-sets of $x$ modulo $p$ we construct a
directed graph $G_{h}=(V,E)$ where the set of vertices
$V$ is $\{ (a,i) \mid a \in \abet, \  0\leq i <p\}$ and there is an edge
from $(c,i)$ to $(d,j)$ if, for some $b$ in $x$, the letter $c$ occurs in the image $h^r(b)$ at position congruent to $i$ (mod $p$) in $x$, and the letter $d$ occurs in the image $h^q(c)$ at position congruent to $j$ (mod $p$) in $x$; see Figure~\ref{fig1}.

\begin{figure}
\centering
\begin{pspicture}(12,5)(0,0)
\psset{unit=0.5cm}
\psset{fillcolor=lightgray}
\psset{dimen=middle}
\psset{tbarsize=7pt 3}

\put(0.2,1.7){$x_0 \cdots x_{l-1} \mathbf{b}$} 
\psline{|*-}(0,1.5)(4,1.5)
\psline{|*-}(4,1.5)(4.7,1.5)
\put(4.85,1.3){$\cdots$}
\psline{-|*}(5.8,1.5)(6.6,1.5)
\put(6.8,1.7){$y_1 \cdots y_{m-1} \ \mathbf{c} \ y_{m+1} \cdots y_n$}
\psline{-|*}(6.6,1.5)(14.3,1.5)
\psline{-}(14.3,1.5)(15,1.5)
\put(15.15,1.3){$\cdots$}
\psline{-|*}(16.1,1.5)(16.9,1.5)
\psline{-}(16.9,1.5)(17.6,1.5)
\put(17.75,1.3){$\cdots$}
\psline{-|*}(18.7,1.5)(19.5,1.5)
\psline{-|*}(19.5,1.5)(22,1.5)
\psline{-}(22,1.5)(22.8,1.5)
\put(22.9,1.3){$\cdots$}
\put(20.5,1.7){$\mathbf{d}$}

\put(0,2.5){$\overbrace{\hspace{8.4cm}}^\text{\normalsize $h^{r+q}(x_0 \cdots x_{l-1})$}$}
\put(16.8,2.5){$\overbrace{\hspace{3.4cm}}^\text{\normalsize $h^{r+q}(\mathbf{b})$}$}

\put(14.9,0.7){$\underbrace{\hspace{1.3cm}}_\text{\normalsize $h^{q}(y_1 \cdots y_{m-1})\rule{1cm}{0cm}$}$}
\put(0,0.7){$\underbrace{\hspace{3.3cm}}_\text{\normalsize $h^{r}(x_0 \cdots x_{l-1})$}$}
\put(6.7,0.7){$\underbrace{\hspace{3.8cm}}_\text{\normalsize $h^{r}(\mathbf{b})$}$}
\put(19.5,0.7){$\underbrace{\hspace{1.25cm}}_\text{\normalsize $h^{q}(\mathbf{c})$}$}

\end{pspicture}
\caption{Images $h^r(b)$ and $h^{r+q}(b)$.}\label{fig1}
\end{figure}
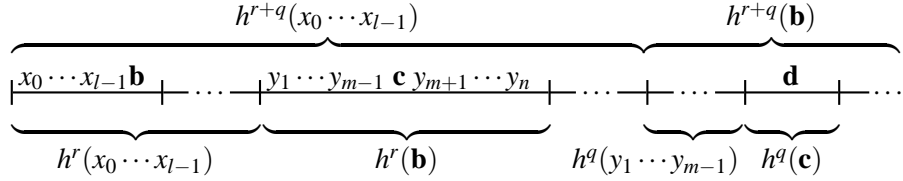

It is possible to construct such a graph by calculating the images
$h^r(b)$ and $h^{r+q}(b)$ for every $b \in \abet$. Namely, if $b=x_l$ and $c$ is the $m$th letter of $h^r(b)=y_1
\cdots y_n$ and $d$ is the $m'$th letter of $h^q(c)$, then we have
\begin{eqnarray}
i &\equiv& |h^r(x_0 \cdots x_{l-1})|+m-1 \pmod p,\label{eqi}\\
j &\equiv& |h^{r+q}(x_0 \cdots x_{l-1})|+|h^{q}(y_1 \cdots y_{m-1})|+m' -1 \pmod p.\label{eqj}
\end{eqnarray}
By~\eqref{eqLemma}, we have $|h^{r+q}(x_0 \cdots x_{l-1})|  \equiv
|h^r(x_0 \cdots x_{l-1})|$ (mod $p$), which together
with~\eqref{eqi} and~\eqref{eqj} implies
\[
j \equiv |h^{q}(y_1 \cdots y_{m-1})|+i+m'-m \pmod p.
\]
We say that a vertex $(c,i) \in V$ is an \emph{initial vertex} if there
exists a letter $b=x_l$ such that $0 \leq l < |h^r(a)|$, $c$ is the $m$th
letter of $h^r(b)$ and $i$ satisfies~\eqref{eqi}. A vertex $(c,k)$ is
called \emph{recurrent} if there exist infinitely many paths starting
from some initial vertex and ending in~$(c,k)$. By construction, this
means that $c$ belongs to the $k$ set of~$x$ modulo~$p$.

Given a coding $g$ and a morphism $h\colon \abet^* \to \abet^*$
prolongable on $a$, it is easy to see that the morphic word
$g(h^\omega(a))$ is ultimately $p$-periodic if and only if $g(b)=g(c)$
for all pairs of letters $(b,c)$ such that $b$ and $c$ belong to the
same $k$-set of~$h^\omega(a)$ modulo~$p$. Since the $k$-sets of
$h^\omega(a)$ can be effectively constructed, we have the following
result proved in~\cite{HHKR10}.

\begin{thm}\label{thPperiodic}
Given a positive integer $p$, it is decidable whether a morphic word
$g(h^\omega(a))$ is ultimately $p$-periodic.
\end{thm}

\section{Decidability of the D0L ultimate periodicity problem}

Before the decidability proof, we give the following result proved
in~\cite{Cul78, ER78}; see also~\cite{Hon08}.

\begin{thm}\label{thCancellation}
Let $h\colon \abet^* \to \abet^*$ be a morphism and $u, v \in
\abet^*$. If there is a positive integer $n$ such that $h^n(u)=h^n(v)$,
then $h^{|\abet|}(u)=h^{|\abet|}(v)$.
\end{thm}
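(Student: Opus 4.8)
The plan is to induct on the size $d=|\abet|$ of the alphabet, exploiting the structural dichotomy that every morphism is either injective on $\abet^*$ or factors through a strictly smaller alphabet. A preliminary remark that I will use throughout is that the set of exponents $\{n : h^n(u)=h^n(v)\}$ is upward closed, since applying $h$ to the equality $h^n(u)=h^n(v)$ yields $h^{n+1}(u)=h^{n+1}(v)$. Consequently it is enough to bound the \emph{least} exponent at which the two images coincide by $|\abet|$, and I may freely relax a smaller bound obtained by induction up to $d-1$.

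The base case $d=0$ is trivial, and the injective case is immediate: if $h$ is injective on $\abet^*$, then every power $h^k$ is injective, so $h^n(u)=h^n(v)$ forces $u=v$ and a fortiori $h^{|\abet|}(u)=h^{|\abet|}(v)$. The substance of the argument is therefore the inductive step for a non-injective $h$. Here I would invoke the standard fact from the theory of elementary morphisms (a consequence of the defect theorem): a non-injective morphism is \emph{simplifiable}, i.e.\ there are an alphabet $\bbet$ with $|\bbet|<|\abet|$ and morphisms $f\colon\abet^*\to\bbet^*$ and $g\colon\bbet^*\to\abet^*$ with $h=g\circ f$. Setting $h'=f\circ g\colon\bbet^*\to\bbet^*$, the key mechanism is the pair of commutation identities $f\circ h^n=(h')^n\circ f$ and $g\circ(h')^{\,d-1}\circ f=h^{d}$, both obtained by regrouping the alternating product of $f$ and $g$ (using $g(fg)^{d-1}f=(gf)^{d}$).

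Given these, the step runs as follows. Applying $f$ to $h^n(u)=h^n(v)$ and using the first identity gives $(h')^n(f(u))=(h')^n(f(v))$. The induction hypothesis for the smaller alphabet $\bbet$ then yields $(h')^{|\bbet|}(f(u))=(h')^{|\bbet|}(f(v))$, and since $|\bbet|\le d-1$, upward closure gives $(h')^{\,d-1}(f(u))=(h')^{\,d-1}(f(v))$. Applying $g$ and invoking the second identity converts this into $h^{d}(u)=h^{d}(v)$, which is exactly the desired conclusion.

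The main obstacle is the simplification lemma for non-injective morphisms; it is the only nontrivial external ingredient and is precisely where the bound $|\abet|$ comes from, as each simplification strips one letter from the alphabet and the induction can descend at most $|\abet|$ times. The remaining work is bookkeeping: verifying the two regrouping identities and confirming that the good-exponent set is upward closed, so that the inductive bound $|\bbet|$ may be relaxed to $d-1$ before $h^{d}$ is reassembled.
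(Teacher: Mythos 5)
Your proof is correct and takes essentially the same route as the paper: induction on alphabet size, the dichotomy between elementary and simplifiable morphisms (you state the key fact contrapositively as ``non-injective implies simplifiable,'' the paper as ``elementary implies injective''), the same factorization $h=gf$ with $h'=fg$ on the smaller alphabet, and the same regrouping identities. The only cosmetic difference is that you relax the exponent to $d-1$ on the small alphabet before applying $g$, whereas the paper applies $g$ first to get $h^{|\bbet|+1}(u)=h^{|\bbet|+1}(v)$ and then relaxes up to $h^{|\abet|}$.
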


This theorem can be proved by induction on the size of the alphabet
and the induction step is based on elementary morphisms. A
morphism $h \colon \abet^* \to \abet^*$ is called \emph{elementary}
if there do not exist an alphabet $\bbet$ smaller than $\abet$ and two
morphisms $f\colon \abet^* \to \bbet^*$ and $g\colon \bbet^* \to
\abet^*$ such that $h=gf$. Since elementary morphisms are injective,
the claim is clear if $h$ is elementary. Now assume that $h=gf$ as
above. Then $h^n(u)=h^n(v)$ implies that $(fg)^nf(u)=(fg)^nf(v)$ and,
by induction, $(fg)^{|\bbet|}f(u)=(fg)^{|\bbet|}f(v)$. This proves the
claim, since $(gf)^{|\bbet|+1}(u)=(gf)^{|\bbet|+1}(v)$ and
$|\abet|\geq |\bbet|+1$.

Using Theorem~\ref{thPperiodic} and Theorem~\ref{thCancellation}
and following the guidelines in~\cite{HarLin86} we give a new proof
for the decidability of the D0L ultimate periodicity problem.
The difference between the original proof of Harju and Linna and this proof is that we employ a new method obtained from $p$-periodicity as stated in Theorem~\ref{thPperiodic}.

\begin{thm}
The ultimate periodicity problem is decidable for D0L sequences.
\end{thm}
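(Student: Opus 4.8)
The plan is to reduce the question to finitely many instances of the $p$-periodicity problem of Theorem~\ref{thPperiodic}, taking the coding $g$ to be the identity, so that the only substantial task is to bound the candidate periods effectively. As explained in the introduction I may assume that $u=a$ is a letter and that $h$ is prolongable on $a$; write $x=h^\omega(a)$. First I would compute $\abet_F$, $\abet_I$, $\abet_R$ and $\abet_1=\abet_I\cap\abet_R$, which the preliminary discussion renders effective, and then split according to whether $\abet_1$ is empty. If $\abet_1=\emptyset$, then no infinite letter recurs, so the word $y$ with $h(a)=ay$ contains only finite letters; since each finite letter has a finite image set, the blocks $h^n(y)$ in $x=ay\,h(y)\,h^2(y)\cdots$ take only finitely many values and the sequence $(h^n(y))_{n\ge 0}$ is eventually periodic. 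In this case $x$ is automatically ultimately periodic and the answer is affirmative.

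Assume now $\abet_1\neq\emptyset$ and suppose, for the sake of the analysis, that $x=zw^\omega$ with $|w|=p$ its minimal period. The letters of $w$ are exactly the letters occurring infinitely often, namely $\abet_R$, and they include the recurrent infinite letters of $\abet_1$. Applying $h$ to the tail gives $h(w^\omega)=h(w)^\omega$, the very same periodic word, so minimality forces $p\mid |h(w)|$; writing $m=|h(w)|/p$ and comparing the Parikh vectors of one period on each level yields $M_R\,\psi(w)=m\,\psi(w)$, where $\psi(w)$ is the Parikh vector of $w$ and $M_R$ is the incidence matrix $M$ restricted to $\abet_R$. Thus $\psi(w)$ is a strictly positive integer eigenvector of $M_R$, its eigenvalue $m$ is therefore the Perron root of $M_R$ and must be an integer, and the presence of an infinite letter in the support forces $m\ge 2$. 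Consequently $\psi(w)$ lies on the (computable) Perron direction of $M_R$, so $p=\|\psi(w)\|_1$ is an integer multiple $\lambda\|v\|_1$ of the norm of the primitive integer Perron eigenvector $v$. This already confines the possible periods to an effectively described arithmetic progression.

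The remaining point, which I expect to be the main obstacle, is to cap the scaling factor $\lambda$ --- equivalently to bound the minimal period by a computable $P$ --- since a primitive word may well have a non-primitive Parikh vector, so the eigenvector constraint alone does not bound $p$. Here I would use the self-similarity of the fixed point: the generating blocks $h^n(y)$ begin at the positions $|h^n(a)|$, and the length sequence $(|h^n(a)|)_{n}$ is ultimately periodic modulo $p$ by the argument behind~\eqref{eqLemma}, so two such blocks eventually start at the same phase inside $w^\omega$ and one is then a prefix of the other. Unwinding these prefix relations yields equalities of the form $h^n(\alpha)=h^n(\beta)$ between images of bounded words, and Theorem~\ref{thCancellation} collapses each existential ``$h^n(\alpha)=h^n(\beta)$ for some $n$'' to the single test $h^{|\abet|}(\alpha)=h^{|\abet|}(\beta)$; this is precisely what makes the bound $P$ computable rather than merely existent. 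With $P$ in hand I would conclude by running Theorem~\ref{thPperiodic} with $g=\mathrm{id}$ for each $p\in\{1,\dots,P\}$, declaring $x$ ultimately periodic exactly when some such $p$ succeeds. The labour thus divides cleanly: Theorem~\ref{thCancellation} furnishes the effective period bound and Theorem~\ref{thPperiodic} performs the verification, the structural reduction to the recurrent infinite letters being routine.
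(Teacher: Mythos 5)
Your overall plan---reduce ultimate periodicity to finitely many runs of Theorem~\ref{thPperiodic} with $g=\mathrm{id}$---is sound \emph{only} if you can effectively produce the finite list of candidate periods, and that is precisely the step your proposal does not prove. The third paragraph, which you yourself identify as the main obstacle, is not an argument but a hope. It reasons ``modulo $p$'' where $p$ is the unknown minimal period; that is legitimate in an existence proof, but it cannot be executed as an algorithm, since the phases inside $w^\omega$ are not computable without already knowing $p$. The claim that prefix relations among the blocks $h^n(y)$ ``unwind'' to equalities $h^n(\alpha)=h^n(\beta)$ between bounded words is never substantiated: you do not say which words $\alpha,\beta$ are, by what they are bounded, or why a \emph{prefix} relation between words of different lengths should become an \emph{equality}. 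Theorem~\ref{thCancellation} can only collapse an equality $h^n(\alpha)=h^n(\beta)$ that you have already exhibited; it cannot manufacture the bound $P$. So the computable bound, the entire point of the reduction, is asserted rather than derived. There is also a flaw in the Perron--Frobenius step: $M_R$ need not be irreducible (take $h(a)=abc$, $h(b)=b^2$, $h(c)=c^2$, where $\abet_R=\{b,c\}$ and $M_R$ is $2I$), so the eigenspace of the spectral radius can have dimension greater than one and there is no single ``Perron direction''; but since you concede that part does not bound $p$ anyway, the fatal gap is the missing bound.

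For comparison, the paper sidesteps bounding the period altogether and instead \emph{constructs a single candidate period}. Fixing $b\in\abet_1$, it writes $h^\omega(a)=u_0bu_1bu_2\cdots$ with $u_i\in(\abet\setminus\{b\})^*$, shows that finiteness of $U=\{u_i\mid i\geq 0\}$ is decidable (if $U$ is infinite, the word is not ultimately periodic), and effectively constructs $U$ and the subset $U'$ of gaps occurring infinitely often. It then tests whether $h^{|\abet|}(bu_ibu_j)=h^{|\abet|}(bu_jbu_i)$ for all $u_i,u_j\in U'$: ultimate periodicity forces these equalities for large exponents (each $h^n(bu_i)$ is a power of a conjugate $w_n$ of the primitive period), and Theorem~\ref{thCancellation} is what licenses replacing ``large exponent'' by the fixed exponent $|\abet|$, so failure of the test certifies non-periodicity. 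If the test passes, the words $h^{|\abet|}(bu_i)$ all commute, hence are powers of one primitive word $z$, and then ultimate periodicity of $h^\omega(a)$ is equivalent to ultimate $|z|$-periodicity, which is one application of Theorem~\ref{thPperiodic}. This is the idea your proposal is missing: instead of an arithmetic bound on all possible periods (which neither the incidence matrix nor Theorem~\ref{thCancellation} gives you), a combinatorial commutation argument on the gap words pins down one explicit period to test.
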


\begin{proof}
As explained above, it suffices to show that we can decide whether
$h^\omega(a)$ is ultimately periodic for a given morphism $h\colon
\abet^* \to \abet^*$ prolongable on $a$. Without loss of generality,
we assume that every letter of $\abet$ really occurs in
$h^\omega(a)$. Otherwise, we could consider a restriction of $h$.
Recall also that $\abet_1$ is the subset of $\abet$ which consists of
the infinite letters occurring infinitely many times in~$h^\omega(a)$.

If $\abet_1=\emptyset$, then the sequence is ultimately periodic.
Namely, if $h(a)=ay$ and $y$ contains infinite letters, then every
image $h^n(y)$ contains infinite letters and there must be at least one
infinite letter occurring infinitely many times in
$h^\omega(a)=ayh(y)h^2(y)\cdots$, which means that $\abet_1 \neq
\emptyset$. Therefore, there is only one infinite letter and it is the
letter $a$ occurring once in the beginning of the word. Hence,
$h(a)=ay$ where $y$ consists of finite letters. Then there must be
integers $n$ and $p$ such that $h^{n+p}(y)=h^{n}(y)$. Thus
$|h^n(y)h^{n+1}(y) \cdots h^{n+p-1}(y)|$ is a period of
$h^\omega(a)$.

Assume now that $b \in \abet_1$. We may write
\[
h^\omega(a)=u_0bu_1bu_2 \cdots,
\]
where $u_i \in (\abet \setminus \{b\})^*$. If the set $U=\{u_i \mid i
\geq 0\}$ is infinite then $h^\omega(a)$ cannot be ultimately
periodic. Note that if there exists a $c \in \abet_I$ such that the letter
$b$ does not occur in any $h^i(c)$, then $U$ is infinite. This property
is clearly decidable since if a letter occurs in $h^i(c)$ for some $i$,
then it occurs in the image for $i \leq |\abet|$. Hence, we may
assume that for each infinite letter $c$ the letter $b$ occurs in
$h^i(c)$ for some $i \leq |\abet|$.

Next we show that we may decide if $U$ is infinite or not. First
assume that $U$ is infinite. Then there are arbitrarily long words in
$U$. Since each infinite letter from $h^\omega(a)$ produces an
occurrence of $b$ in at most $|\abet|$ steps, there must be
arbitrarily long words from $\abet_F$ in $U$. This is possible only if
for some $c \in \abet_I$ and integer $s \leq |\abet|$ we have
$h^s(c)=v_1cv_2$, where  for $i=1$ or $i=2$ we have $v_i \in
\abet_F^+$ and $h^n(v_i) \neq \varepsilon$ for every $n \geq 0$. This
is a property that we can effectively check. Note that if
$h^n(v_i)=\varepsilon$ for some $n \geq 0$, then
$h^{|\abet|}(v_i)=\varepsilon$. On the other hand, if there exists $c
\in \abet_I$ satisfying the above conditions, the set $U$ is clearly
infinite. Hence, the finiteness of $U$ can be verified and the finite set
$U$ can be effectively constructed.

Now assume that $h^\omega(a)$ is ultimately periodic, i.e.,
$h^\omega(a)=uv^\omega$, where $v$ is primitive. Consider a subset
$U'$ of $U$ containing the elements $u_i$ occurring infinitely many
times in $h^\omega(a)$. Since $b$ is in $\abet_I$, there exists an
integer $N$ such that $|h^n(b)| \geq |v|$ for every $n \geq N$.
Hence, let $n \geq N$. Since $bu_i$ with $u_i \in U'$ occurs in the
periodic part of the sequence, we conclude that $h^n(bu_i) \in
w_n\abet^*$, where $w_n$ is a conjugate of~$v$. Moreover, by the
primitivity of $v$ and $w_n$, we have
\begin{equation}\label{eqWn}
h^n(bu_i) \in w_n^* \quad \text{for all $u_i \in U'$}.
\end{equation}
Namely, assume that $h^n(bu_i)=w_n^t w'$, where $t$ is some positive integer and $w'$ is a proper
prefix of $w_n$, i.e., $w'$ is non-empty and $w' \neq w_n$. Then $h^n(bu_ib) \in
w_n^t w'w_n \abet^*$ is a prefix of $w_n^\omega$, which implies
that the word $w_n$ occurring after $w'$ occurs inside $w_n^2$.
Since $w_n$ is primitive, this is impossible.

Take now any two words $u_i$ and $u_j \in U'$. By~\eqref{eqWn},
we conclude that there exists $m$ such that
$h^\ell(bu_ibu_j)=h^\ell(bu_jbu_i)$ for all $\ell \geq m$. Moreover, by
Theorem~\ref{thCancellation}, we know that we may choose
$m=|\abet|$. Note that if the above does not
hold for some $u_i$ and $u_j$ in $U'$, then $h^\omega(a)$ cannot be
ultimately periodic. Hence, let $m=|\abet|$ and
\[
h^m(bu_ibu_j)=h^m(bu_jbu_i),
\]
for every $u_i, u_j \in U'$. Then the words $h^m(bu_i)$ and
$h^m(bu_j)$ commute and by transitivity we can find a primitive
word $z$ such that
\[
h^\ell(bu_i) \in z^* \quad \text{for all $u_i \in U', \ \ell \geq m$.}
\]
This implies that $h^\omega(a)$ is ultimately $|z|$-periodic. Since
we can test the ultimate $|z|$-periodicity of $h^\omega(a)$ by
Theorem~\ref{thPperiodic}, the ultimate periodicity problem of
$h^\omega(a)$ is decidable.
\end{proof}

\bibliographystyle{eptcs}
\bibliography{D0Lbib}

\end{document}